\newtheorem{lemma}{Lemma}
\newtheorem{definition}{Definition}
\newtheorem{assumption}{Assumption}
\def\BibTeX{{\rm B\kern-.05em{\sc i\kern-.025em b}\kern-.08em
    T\kern-.1667em\lower.7ex\hbox{E}\kern-.125emX}}
\begin{document}

\title{SCC5G: A PQC-based Architecture for Highly Secure Critical Communication over Cellular Network in Zero-Trust Environment  \thanks{This material is based upon the work supported by the US National Science Foundation under Grant No. IIP-2204502 and the Air Force Office of Scientific Research under award number FA9550-20-1-0090.}}

\author{
	\IEEEauthorblockN{
	Mohammed Gharib\IEEEauthorrefmark{1},
	Fatemeh Afghah\IEEEauthorrefmark{1} 
	}
	\IEEEauthorblockA{\IEEEauthorrefmark{1}Department of Electrical and Computer Engineering, Clemson University, Clemson, SC,  USA\\ E-mail:\{alghari,fafghah\}@clemson.edu}

}


\maketitle

\begin{abstract}



5G made a significant jump in cellular network security by offering enhanced subscriber identity protection and a user-network mutual authentication implementation. However, it still does not fully follow the zero-trust (ZT) requirements, as users need to trust the network, 5G network is not necessarily authenticated in each communication instance, and there is no mutual authentication between end users. When critical communications need to use commercial networks, but the environment is ZT, specific security architecture is needed to provide security services that do not rely on any 5G network trusted authority.  In this paper, we propose SCC5G Secure Critical-mission  Communication over a 5G network in ZT setting. SCC5G is a post-quantum cryptography (PQC) security solution that loads an embedded hardware root of authentication (HRA), such as physically unclonable functions (PUF), into the users' devices, to achieve tamper-resistant and unclonability features for authentication and key agreement. We evaluate the performance of the proposed architecture through an exhaustive simulation of a 5G network in an ns-3 network simulator. Results verify the scalability and efficiency of SCC5G by showing that it poses only a few kilobytes of traffic overhead and adds only an order of $O(0.1)$ second of latency under the normal traffic load. 
\end{abstract}

\begin{IEEEkeywords}
Security, Cellular Network, Zero-Trust, PUF, 5G
\end{IEEEkeywords}

\section{Introduction}
Communication is an essential component of any critical-mission application such as disaster relief, military missions, etc. In many cases, mission-specific communication might not be available due to the coverage shortage, capacity limitation, and very fast network topology change. Availability, network capacity, mobility support,  and large coverage area of cellular networks make them an exceptional choice for critical-mission use when a dedicated communication link is unavailable. Three different scenarios are discussed in \cite{anshu}  where the mission-critical applications require 5G communication, for machine-to-machine communication, for hypersonic weapon command and control, and for millimeter wave (mmWave) usecases where there will be a large bandwidth with ultra-low latency. The focus of the Department of Defence (DoD) on 5G testing and experimentation for warfighting is also evidence of the importance of mission-critical 5G communication. DoD invested more than 600 million in 2020 for 5G testbeds \cite{dodCellular}, deployed at 12 sites across the country.  Working with more than 100 industry partners, DoD expects the full operation of the dual-use 5G cellular network in 2023 \cite{dodCellular}. Besides the DoD effort, many researchers proposed various mission-critical use cases for 5G networks and showed how they can advance such mission operations \cite{lukman,nato,gronsund,liao,elmasri}.

5G network, in addition to the aforementioned cellular network advantages, has a striking enhancement in terms of communication security in comparison with its predecessors, especially in identity protection and utilizing more robust encryption algorithms. However, security is still the main concern in utilizing them for critical-mission communication since 5G does not fully follow the National Institute of Standard and Technology (NIST) and the Department of Defense (DoD) recommendation of considering ZT architecture. In ZT architecture there is no pre-assumed trust and every communication instance should be protected by verifying the communicating parties regardless of their location and ownership \cite{nistZT,dodZT}.  
Critical-mission agents not only deal with a ZT network where they cannot rely on the network's regular security protocols, but they are also at constant risk of physical attacks when closely interacting with adversary agents in ``unknown'' or ``contested'' environments. 

Generally, three main reasons make a significant need for a new architecture for critical-mission communication over a 5G network, $i)$ in a 5G network, the mutual authentication is between the user and the network, not between the end users, $ii)$ the entire authentication process is based on the Sim card which can be stolen, and $iii)$ the network is responsible for the authentication verification, and hence the critical-mission agents need to trust the network. Regardless of the application domain, NIST \cite{PQC-NIST} and NSA \cite{pqcnsa} recommended that critical-mission communication should be migrated to post-quantum cryptography (PQC) architectures in addition to accommodating ZT requirements. To achieve this goal, a robust PQC-based key management system capable of providing end-to-end encrypted communication is required where each node should be authenticated for each communication instance.

In this paper, we present a PQC-based security architecture for mission-critical communication, utilizing a 5G cellular network as the communicating layer, considering ZT settings. This architecture is referred to as \textit{SCC5G, Secure Critical-mission Communication over a 5G network}. It protects the agents' devices, e.g. cell phones, ground vehicles, unmanned aerial vehicles, IoT devices, etc., from revealing secret information even if they are targeted by physical hijacking attacks. SCC5G forces a mutual authentication for each communication session to fulfill the ZT requirement. To protect the critical-mission users' devices, SCC5G provides each user with a hardware root of authentication (HRA), which is a pre-loaded hardware in the user device. The HRA should be able to take a challenge and respond with a unique  response. SCC5G requires the HRA to be physically unclonable, tamper-resistant, and have high entropy in its challenge-response pairs, i.e. have a large enough state-space of challenge-response pairs. Commercial 5G networks can provide mutual authentication among users using the users' SIM cards as the hardware root of trust (HRT). However, a SIM card is just used as hardware identification and could be cloned or tampered. 

We assume that the commercial 5G network cooperates with the critical-mission agency to add trusted authority referredd to as HRA authentication and keying function (HAKF) into its structure. SCC5G keeps the security protocols of the 5G network and adds another layer of security by utilizing the HRA devices embedded in the users' communication devices, to provide an asymmetric cryptosystem that does not rely on any 5G network trusted authority. The idea behind SCC5G is to bind the public-private key pair generated by a critical-mission user's device to the embedded HRA by using the hashed readings stored in the HAKF. It will verify the originality of the generated public key as well as the identity of the device. The entire solution relies on ID verification in every communication session to address the ZT requirements. The unclonability and high entropy properties of the HRA guarantee the uniqueness of the verification part. While any hardware with the mentioned properties could be utilized within SCC5G architecture, we suggest using a physical unclonable function (PUF) as an HRA. PUFs exploit the fabrication variation of the microelectronic devices to generate a unique response for the corresponding input. 

SCC5G has an initialization phase before the network starts its operation. At the initialization phase, the network admin chooses the required security parameters and forms a certification storage. The certification storage is used for certificate verification and includes non-confidential information. It includes the hashed values of the image of each HRA, bound to the ID of the carrying user. The network starts its operation after the initialization phase. By starting the operation, each user can utilize the cellular network to connect to any other  user. The proposed architecture encrypts the
communication from end to end and authenticates end users for any communication instance. In SCC5G, users are responsible for generating their cryptographic keys based on the PQC cryptosystem and the pre-defined cryptographic parameters. The security and the performance of the proposed architecture are comprehensively studied. We show that SCC5G carries the NIST-recommended properties \cite{nistKeyManegmet} for the secure key management system and it is resistant to well-known key management attacks such as man-in-the-middle (MitM).  We use network simulator ns-3  \cite{ns3} to evaluate the performance of the proposed architecture and verify its efficiency and scalability. Results show that both the communication overhead and latency of SCC5G  architecture are negligible with the order of $O(1)$ KB and $O(0.1)$ second, respectively, in the network with a normal load. 


The main contribution of this paper is to propose a novel architecture for utilizing the existing 5G cellular network for critical-mission communication in a zero-trust setting. It adds another layer of security by encrypting the data communication in an end-to-end manner, mutually authenticating the end users, and supporting PQC cryptography. It further enhances data transfer communication by providing a session key agreement for data communication. Comprehensive security and performance evaluation  validates the feasibility and dependability of the proposed architecture.       

The rest of this paper is organized as follows. We review the related work in Section (\ref{sec::relatedWork}) and present the system model in Section (\ref{sec:systemModel}). The details of the proposed architecture are reviewed in Section (\ref{sec::proposed}). We evaluate the security and the performance of the proposed architecture in Section (\ref{sec::evaluation}).
Finally, we conclude the paper and mention the future directions in Section (\ref{sec::conclusion}).


\section{Related Work}
\label{sec::relatedWork}

There are over 40 types of suggested PUF technologies where some of which are based on the embedded memory of the hardware. The advantage of utilizing embedded memory as a PUF is the fact that they are widely available in cyber-physical systems. Examples of the utilized embedded memories include SRAMs, DRAMs, Flash RAMs, MRAM, and ReRAM \cite{SHAMSOSHOARAsurvey}. The ReRAM advantage for SCC5G over the other memory-based PUF technologies is its tamper-resistant property and self-destruction mechanism in case of outsider manipulation \cite{reram}. However, there are two main challenges in utilizing PUF in cryptosystems, the need for a trusted server to store the image of challenge-response pairs, and the extreme sensitivity of the entire system to the PUF response variation. 

The PUF-based security solutions have been mostly utilized in trustable networks since they require a secure server to keep a record of original PUF responses. The initial PUF responses are generated with a set of challenges and are stored as references in secure servers. During the authentication cycle, PUFs are challenged to generate fresh responses that should match the reference responses stored in the network. 
Ren et al. \cite{ren} proposed a PUF-based group authentication scheme, where a repository of all challenge-response pairs is stored in the Authentication Server Function (AUSF) in the home network. As an instance, Alladi et al. \cite{5gDrone} proposed an authentication scheme for drone-assisted 5G networks. While the protocol offers user-to-user mutual authentication, it stores the entire challenge-response pairs in the connected base station and assumes that all base stations have secure communication in between. Mall et al. \cite{iotPufSurvey} published a survey including 44 PUF-based authentication and key agreement protocols for the Internet of thing (IoT), wireless sensor network (WSN), and smart grids. However, all of the reviewed works need to store the plain challenge-response pairs somewhere in the network. Although the mentioned requirement may not seem to be unacceptable in WSN, IoT, and smart grids, it is an obvious security threat to critical-mission communication networks. SCC5G architecture does not store the challenge-response pairs, instead, it just keeps the hash value of the PUF image in HAKF. Although the HAKF is not a 5G component and could be trusted by the critical-mission agents, it does not have access to the PUF images.    

The use of PUFs for cryptographic key generation is the most challenging problem since even a single-bit mismatch between the challenge and response could not be tolerated. Robust secret key generation heavily relies on fuzzy extractors\cite{PUF_PAINE} and error-correcting methods which use a set of publicly available helper data. Most of the current PUF technologies suffer from vulnerability to side-channel analysis. The newly developed ultra-low power tamper-resistant ReRAM-based PUFs with a self-destruction mechanism, however, represented promising performance in 
 mitigating this shortcoming \cite{PUF_PAINE,reram}. The developed masking techniques added  also another layer of protection for the PUF-based key generation against helper data manipulation attacks \cite{Amir_ACCESS_2022}. Noting the extreme sensitivity of PUF-based key generation to variations of PUF responses, such methods require long helper data and trustable servers, which is not feasible in a zero-trust environment. In this work, we utilize the embedded PUFs as a means to certify the devices' public keys.

\section{System Model}
\label{sec:systemModel}

SCC5G utilizes the existing 5G cellular communication as an underlay and adds an overlay secure communication level at the top of it. The sage cloud in Fig. (\ref{fig:framework}) represents the existing 5G cellular network where both third-generation partnership project (3GPP) users, i.e. SIM card-based users, and non-3GPP users, i.e. users without SIM card chipset on their devices, can connect to the network. Similarly, the cloud with cantaloupe color represents the overlay formed by SCC5G architecture for critical-mission secure communication, where the users can be 3GPP or non-3GPP users, but need to have their own HRA. We will use \textit{user} and \textit{critical-mission user} interchangeably throughout the paper, since the architecture is proposed for mission-critical users, even in the presence of the typical 5G users. All non-3GPP users, critical-mission or typical, connect to the 5G core network through Non-3GPP Interworking Function (N3IWF) across IP security (IPsec) tunnel. N3IWF plays the role of a virtual private network (VPN) server allowing the non-3GPP users to establish an IPsec tunnel into the 5G core network.
\begin{figure}
\centering
    \includegraphics[width=\linewidth]{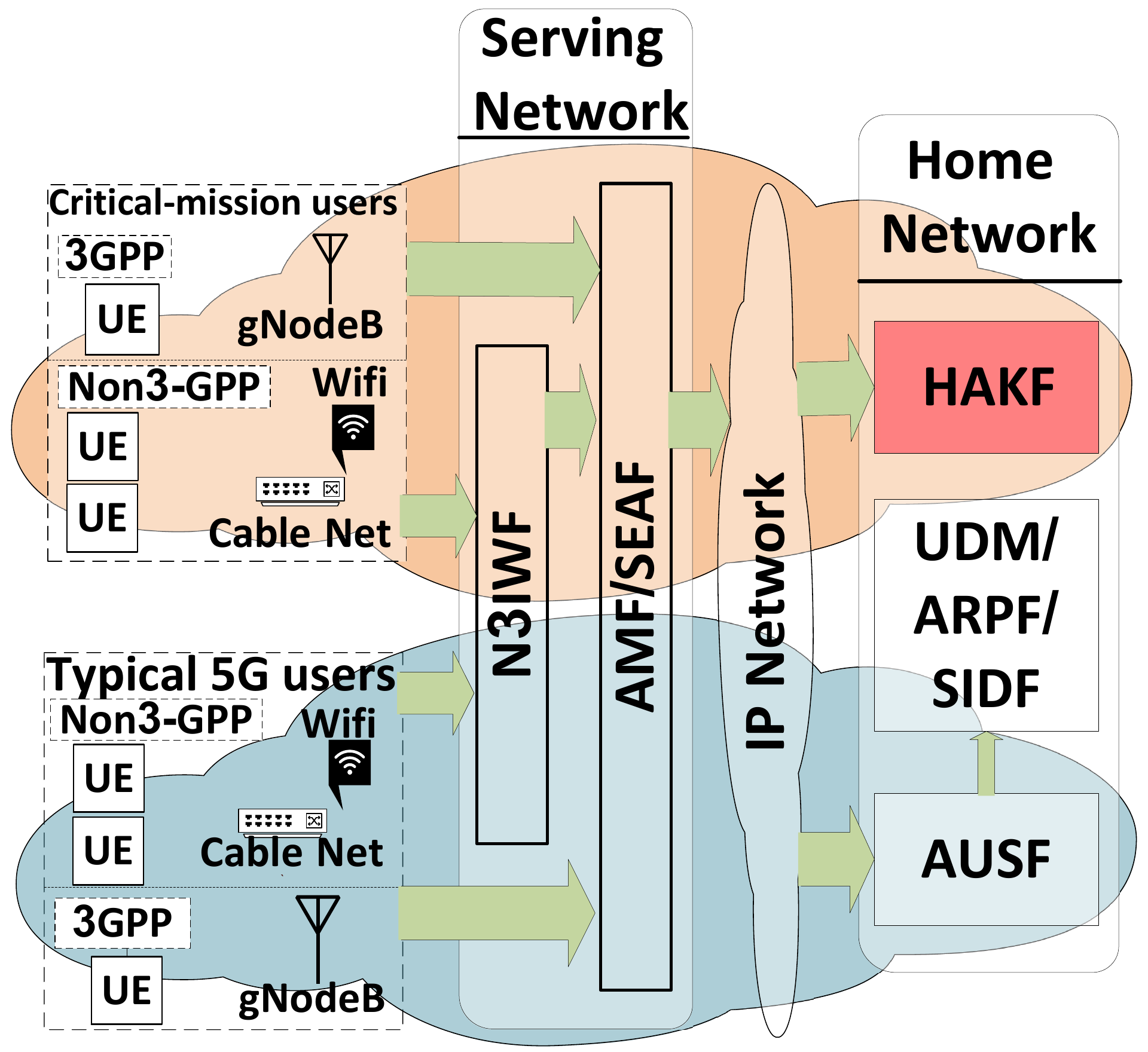}
\caption{SCC5G architecture; Secure Critical-mission Communication architecture over a 5G network.}
\vspace{-0.6 cm}
\label{fig:framework}
\end{figure}

The user's physical connections to the 5G core network or N3IWF could be held through the radio channel provided by WiFi or next generation node B (gNodeB) or even through the cable network. We divided the 5G core network into two parts based on their physical connections with the other network parts to be direct radio access or to be behind the IP network. We name the part of the core network with direct radio access as the \textit{serving network} and the latter one as the \textit{home network}. Such naming is not novel as it could be seen in \cite{cablelab}. Access and Mobility Management Function (AMF) which is responsible for connecting the users through different interfaces with 5G core networks, and Security Anchor Function (SEAF) which is responsible for initializing the 5G authentication process are located in the serving network.

The second part of the 5G core network is located behind the IP network, hence, theoretically, its building blocks could be located at any physical location. Unified Data Management (UDM) which is responsible for managing the subscribers' data, and Subscription Identifier De-concealing Function (SIDF) which is responsible for decrypting subscribers' IDs, i.e. Subscription Concealed Identifier (SUCI), are two of the 5G home network building blocks. The Authentication Server Function (AUSF) and Authentication Credential Repository and Processing Function (ARPF) are two other building blocks of the 5G home network. ARPF is responsible for authentication method selection based on the configured policy and subscriber identity. AUSF is the responsible party for making the decision about authentication or denying the authentication of the users based on some keying material that ARPF is  responsible for computing them. In SCC5G, we add HRA authentication and keying function (HAKF) to the 5G home network considering that the 5G network is cooperative with the critical-mission administration. We leave the extension of this work to be compatible with a non-cooperative 5G network for future work. HAKF contains the database of the hashed images of the HRAs. Each image is bound to the corresponding user ID. Since the HAKF is placed in the home network, it could be located at any secure location, and it does not interfere with the regular 5G core network functionality.  

\begin{assumption}
In SCC5G architecture, we assume that the 5G network is cooperative with the critical-mission network administration.
\end{assumption}

\begin{assumption}
We assume that the critical-mission agents will not communicate or respond to any critical-mission user communication if the authentication process fails.
\end{assumption}

\begin{assumption}
    \label{assum:cryptosystem}
    We assume that the utilized cryptosystem saves the confidentiality and integrity of the encrypted data for everyone except for the ones who have the proper key. The proof of the correctness of the utilized cryptosystem is beyond the scope of this paper. 
\end{assumption}

\section{Proposed SCC5G Architecture} 
\label{sec::proposed}

In this section, we present the details of SCC5G, secure critical-mission communication over a 5G network. The aim is to utilize the existing 5G network for critical-mission communication with end-to-end encryption, considering a ZT setting, where none of the network entities is assumed to be trusted. The proposed architecture load each user with an HRA and store the hash of the HRA's image in a central database. SCC5G architecture has two phases, the initialization phase, and the network operation phase. In this architecture, each user is responsible for generating its own keys. To authenticate the keys and bind them to the users, the image of the HRA is pre-hashed and loaded in a part of the network, behind the IP network. \textit{HRA authentication and keying function (HAKF)} is the network part that we added to the general 5G architecture to be responsible for storing the hashed HRA images and performing the authentication and certification-related tasks. In this section, we describe the details of the network initialization and operation phases in separate parts.

\textbf{Network Initialization Phase:} The network initialization phase includes three main steps, providing each critical-mission user with an HRA, choosing the cryptosystem parameters, and forming the certification database. During an enrolment phase performed in a secure ground, each user is loaded with an HRA. The HRA could be any hardware with four properties, $i)$ to be physically unclonable, $ii)$ to be tamper resistant, $iii)$ returns a unique response for any unique challenge, and $iv)$ has high entropy. The high entropy means having a large enough number of challenge-response pairs. The network initialization phase further contains a selection for cryptosystem parameters. SCC5G architecture is designed to be compatible with any asymmetric cryptosystem. However, we use the recent NIST-recommended PQC cryptosystem CRYSTAL-Kyber \cite{kyber} for key encapsulation mechanism (KEM) and CRYSTAL-Dilithium \cite{dilithium} for digital signature, without the loss of generality. The network admin chooses the required cryptosystem parameters at the initialization phase of the network. In the case of CRYSTAL-Kyber and CRYSTAL-Dilithium, the network admin chooses $k,d_t, d_u, d_v,$ and $q$, representing the number of columns in matrix A, the size of $t$, $u$, and $v$ vectors, and the size of the ring of the learning with error (LWE) problem, respectively. Each user, in its turn, forms the image of its HRA, i.e. all possible challenge-response pairs, hashes the responses, and sends the resulting table to the network admin. Network admin, in his turn, stores the hashed challenge-response pairs of each user bounded to their IDs in a database, for certification purposes. Considering the ZT setting, each user for any communication instance needs to be authenticated. To authenticate the user, two verification parts are used, a signed public key by the user's private key, and a hashed response of the user's HRA where the challenge is extracted from the public key. Algorithm (\ref{alg::init}) represents the pseudocode of the initialization process. 
\vspace{-0.3 cm}

\begin{table}[t!]
\caption[]{Table of notations }
\resizebox{1\textwidth}{!}{
\begin{minipage}{\textwidth}
\begin{tabular}{ l | l  }
  $C$& The challenge\\
  $R$& The response\\ 
  $f(.)$& The HRA function transforms C into its corresponding R.\\  
  $H(.)$& Hash function.\\
  $n$& Number of critical-mission users.\\
  $U_i$& The $i$th user.\\
  $HRA_i$& The HRA loaded into the user $U_i$. \\
  $\rho$&The seed to be expanded to generate matrix $A$. \\
  $\sigma$&The seed to be expanded to generate vector $s$. \\
  $e$& Additive error.\\
  $s_i$& The private key of the user $U_i$.\\
  $Y_i=(A,t)$& The public key of the user $U_i$.\\
  $m$& The plain text\\
  $c=(u,v)$& The cipher text, i.e. the concatenation of vectors $u$ and $v$.\\
  $d_t$& The dimension of vector $t$.\\
  $d_u$& The dimension of the vector $u$.\\
  $d_v$& The dimension of the vector $v$.\\
  $k$& Number of columns in matrix $A$.\\  
  $S_{i,j}$& The session key between users $U_i$ and $U_j$.\\
  $q$& A large prime number representing the\\  & ring size of the LWE problem. 
\end{tabular}
\label{tbl::notation}
\end{minipage}}
\vspace{-0.7 cm}
\end{table}


\begin{algorithm}
\caption{SCC5G Initialization Phase}
 \label{alg::init}
\begin{algorithmic}
\STATE{$[k,d_t,d_u,d_v,q]\leftarrow$defineNetParam()}, \COMMENT{The network admin chooses the network parameters.}
\STATE{Form($CertificateDB$)}, \COMMENT{The network admin forms a database $CertificateDB$ for certification process.}
\FOR{$(i=1:n)$}
\STATE{Load($U_i,HRA_i$)}, \COMMENT{The $i$th user is loaded with an HRA.}
\STATE{$j\leftarrow 1$}, \COMMENT{A counter for the number of challenge-response pairs.}
\STATE{Form($DB_i$)}, \COMMENT{User $U_i$ forms its own database $DB_i$.}
\WHILE{$ ! endOf(HRA_i)$}
\STATE{$H_j \leftarrow $ Form($ID_i, C_j,H(f(C_j))$)}, \COMMENT{User $U_i$ forms a database entry containing its ID,  challenges $C_j$, and the hashed value of the corresponding response.}
\STATE{Concat($H_j, DB_i$)}, \COMMENT{User $U_i$ concatenate the $H_j$ entry to its own database.}
\ENDWHILE
\STATE{Concat($DB_i,CertificateDB$)}, \COMMENT{Newtork admin concatenate the user $U_i$ database into the certification database.}
\ENDFOR
\end{algorithmic}
\end{algorithm}
\vspace{-0.4 cm}


\textbf{Network Operation Phase:} In this architecture, each user is responsible for generating its pair of public-private keys. When the network starts working, i.e. operational phase of the network, each user chooses two random numbers $\rho$ and $\sigma$ to generate 
\begin{equation}
    A\sim R^{k\times k}_q:= Sam(\rho),
\end{equation} 
\begin{equation}
    (s,e)\sim \beta_{\eta}^k \times\beta_{\eta}^k:= Sam(\sigma),
\end{equation} 
and
\begin{equation}
    t:=Compress_q(As+e,d_t),
\end{equation} 
where $(A,t)$ and $s$ are the user's public key and private key, respectively. Here, $A$ is a $k \times k$ matrix in the Galois Field of order $q$ where $Sam(.)$ and $Compress(.)$ are sampling and compression functions, respectively. In this system, $A\times s +e=t$, where $e$ is an additive error and the security of the system is based on the hardness of module learning with error (M-LWE) problem. For more detailed information about how this PQC system works, we refer the researchers to read CRYSTAL-Kyber main paper \cite{kyber}. 

\begin{figure}
\centering
    \includegraphics[width=\linewidth]{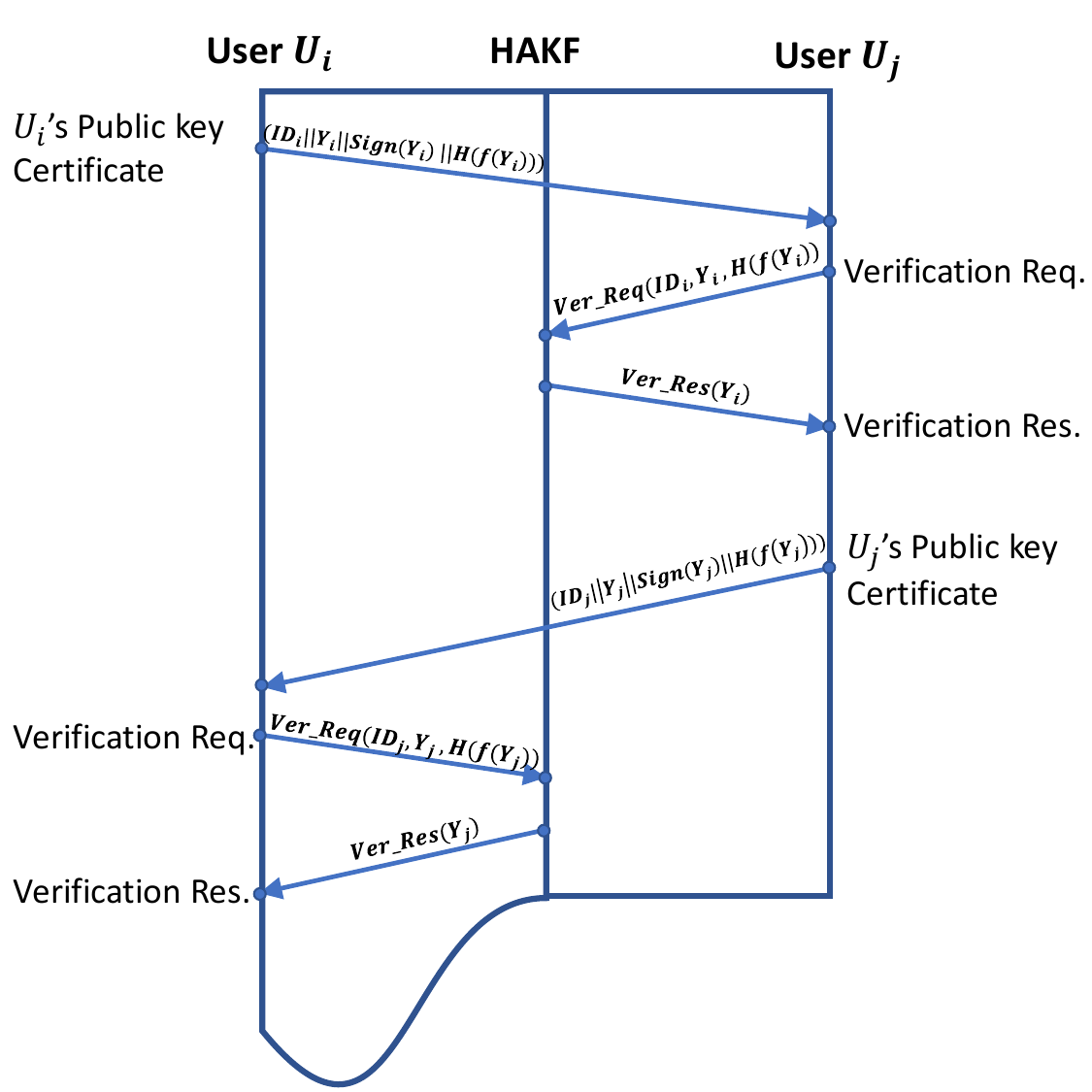}
\caption{SCC5G mutual authentication handshake process. }
\vspace{-0.8 cm}
\label{fig::handshake}
\end{figure}
\begin{algorithm}
\caption{SCC5G Handshake Process }
 \label{alg::handshake}
 \SetKwBlock{useri}{User $U_i$}{}
 \SetKwBlock{userj}{User $U_j$}{}
 \SetKwBlock{hakf}{HAKF}{}
\begin{algorithmic}
\STATE{\textbf{Handshake process between $U_i$ and $U_j$, initiated by  $U_i$.}}
\BlankLine
\nl \useri
{
\STATE{$R_i\leftarrow f(Y_i)$}
\STATE{$Sgn_i\leftarrow Sign(Y_i)_{s_i}$}, \COMMENT{Function $Sign(m)_{s_i}$ signs the message $m$ with the key $s_i$.}
\STATE {$Cert_i\leftarrow$ Concat($ID_i, Sgn(i),H(R_i)$)}, \COMMENT{The function Concat() concatenates the included arguments.}
\STATE {Send($Y_i,Cert_i)\rightarrow U_j$ }
}

\BlankLine
\nl \userj
{
\STATE{$v\leftarrow Verify(Y_i,Cert_i)$},\COMMENT{Function Verify() returns a boolean value representing the verification process result for the included sign and key.}
\IF{$(v)$}
\STATE{ver\_Req$\leftarrow$ Concat($ID_i,Y_i,H(R_i)$)}
\STATE{Send$($ver\_Req$)\rightarrow$ HAKF}
\ELSE
\STATE{Terminate the handshake process}, \COMMENT{Handshake failure}
\ENDIF
}

\BlankLine
\nl \hakf
{
\STATE{$R\leftarrow$ Retrieve($ID_i,Y_i$)},\COMMENT{Function Retrieve() retrieves the content of the provided address from the database.}
\IF{$(R==H(R_i))$}
\STATE{ver\_Res$\leftarrow$ True}
\ELSE
\STATE{ver\_Res$\leftarrow$ False}
\ENDIF
\STATE{Send(ver\_Res)$\rightarrow U_j$ }
}
\BlankLine
\nl \userj
{
\IF{($!$ ver\_Res)}
\STATE{$R_j\leftarrow f(Y_j)$}
\STATE{$Sgn_j\leftarrow Sign(Y_j)_{s_j}$}
\STATE {$Cert_j\leftarrow$ Concat($ID_j, Sgn(j),H(R_j)$)}
\STATE {Send($Y_j,Cert_j)\rightarrow U_i$ }
\STATE{$S_{i,j}\leftarrow Encapsulate(Y_i) $}
\ELSE
\STATE{Terminate the handshake process},\COMMENT{Handshake failure}
\ENDIF
}
\BlankLine
\nl \useri
{
\STATE{$v\leftarrow Verify(Y_j,Cert_j)$}
\IF{$(v)$}
\STATE{ver\_Req$\leftarrow$ Concat($ID_j,Y_j,H(R_j)$)}
\STATE{Send$($ver\_Req$)\rightarrow$ HAKF}
\ELSE
\STATE{Terminate the handshake process},\COMMENT{Handshake failure}
\ENDIF
}
\BlankLine
\nl \hakf
{
\STATE{$R\leftarrow$ Retrieve($ID_j,Y_j$)}
\IF{$(R==H(R_j))$}
\STATE{ver\_Res$\leftarrow$ True}
\ELSE
\STATE{ver\_Res$\leftarrow$ False}
\ENDIF
\STATE{Send(ver\_Res)$\rightarrow U_i$ }
}
\BlankLine
\nl \useri
{
\IF{($!$ ver\_Res)}
\STATE{$S_{i,j}\leftarrow Encapsulate(Y_j) $}
\ELSE
\STATE{Terminate the handshake process},\COMMENT{Handshake failure}
\ENDIF
}
\end{algorithmic}
\end{algorithm}


For any communication instance, a mutual authentication followed by a session key agreement between the two communicating end users is required.  Fig. (\ref{fig::handshake}) represents the handshake process in a diagram. Considering the challenge-response process of the HRA as a function $f(.)$, the user who requires to establish a new communication instance, gives its public key as the input to its HRA, takes the response $f(A,t)$, hashes the response to generate the verification part, i.e. $H(f(A,t))$, and forms a certificate by concatenating his ID, public key, signed public key by the private key, and the hashed response of the HRA. The initiating user sends the concatenated message to the destination through the 5G network. The destination receives the message and takes four actions, it checks the validity of the signed public key, sends a verification request to HAKF to verify the verification part, generates and sends a public key certificate in the same way as the source user, and calculates a session key $S_{i,j}$. The source node, in its turn, gets the certificate, extracts the public key of the destination, verifies the destination public key with the help of HAKF, and calculates a session key $S_{i,j}$. The session key can be calculated by the two endpoints when they have the public keys of one another as 
\begin{equation}
    S_{i,j}=Encapsulate(Y_i)=Encapsulate(Y_j).
\end{equation}
It is worth mentioning that the users send a certificate including their public key in all of their communications with HAKF. While the communication from each user to the HAKF includes only public information, the HAKF encrypts its response to the verification request by the public key of the requesting node, to protect the response from manipulation. The pseudocode of the SCC5G mutual authentication handshake process is represented in Algorithm (\ref{alg::handshake}). In Section (\ref{sec::evaluation}), we show that the proposed handshake has a negligible overhead, follows the NIST key management recommendations \cite{nistKeyManegmet}, and is resistant to the well-known attacks.

\section{Security and Performance Evaluation}
\label{sec::evaluation}
Security and performance of the proposed architecture are required to be carefully considered together in the evaluation process, as there may exist a highly secure architecture but carrying high complexity in terms of computation, time, storage, etc. which makes it infeasible to be implemented in the real world. In this section, we first evaluate the security of SCC5G in terms of its security properties that might be interpreted as resiliency against different well-known attacks. Then, we exhaustively evaluate the performance of the mutual authentication and session key generation algorithm which is considered the core algorithm of the proposed architecture, in terms of latency and communication overhead.

\subsection{Security Evaluation}
We show in this section that SCC5G carries the essential key management properties (explicit key authentication, key freshness, perfect forward secrecy, backward secrecy, key independence), provides the primary security services (confidentiality, integrity, authentication, authorization, and non-repudiation),  and is resistant against the most well-known key management attacks (MitM attack and ID-spoofing).

\begin{definition}
\textbf{Explicit key authentication} is the process of authenticating a key with the knowledge that the intended user owns the corresponding private key. 
\end{definition}

\begin{lemma}
SCC5G has explicit key authentication property.
\end{lemma}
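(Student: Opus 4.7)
The plan is to argue that the SCC5G handshake supplies two independent pieces of evidence that together establish both authenticity of the public key and possession of the matching private key by the claimed identity. The first piece is the digital signature $Sgn_i=Sign(Y_i)_{s_i}$ carried inside $Cert_i$; the second piece is the hashed HRA response $H(R_i)=H(f(Y_i))$ whose preimage is fixed in the \textit{CertificateDB} at the initialization phase and bound to $ID_i$. I would organize the proof so that each piece individually rules out one type of forgery, and jointly they imply explicit key authentication in the sense of the preceding definition.

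First, I would invoke Assumption~\ref{assum:cryptosystem} on the digital signature scheme: the signature $Sign(Y_i)_{s_i}$ can verify under the candidate public key $Y_i$ only if the sender actually holds the corresponding private key $s_i$. Hence, if $Verify(Y_i,Cert_i)$ returns true, then whoever produced the certificate possesses $s_i$. This takes care of the ``owns the corresponding private key'' half of the definition, but by itself it does not tie $Y_i$ to a particular identity $ID_i$, since a malicious party could freshly generate its own $(Y'_i,s'_i)$ pair and sign with $s'_i$.

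Next, I would use the HAKF check to pin the key to the user. The \textit{CertificateDB} stores, for every legitimate user $U_i$, the hashed response $H(f(C))$ indexed by $(ID_i,C)$ for each challenge $C$, and at handshake time $U_j$ forwards $(ID_i,Y_i,H(R_i))$ so that HAKF looks up the entry under $(ID_i,Y_i)$ and compares it to $H(R_i)$. The unclonability and tamper-resistance of the HRA (stated in the initialization phase) guarantee that only the physical device embedded in $U_i$ can produce $f(Y_i)$, and the collision resistance of $H$ prevents an adversary from guessing a preimage that collides with the stored hash. Therefore equality at HAKF implies the sender actually queried the authentic $HRA_i$ on the challenge $Y_i$, which is only possible if the sender is $U_i$.

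Combining the two checks, a passing handshake certifies that $Y_i$ was signed by the holder of $s_i$ and that the same party controls $HRA_i$, which is registered to $ID_i$. The symmetric steps~4--7 of Algorithm~\ref{alg::handshake} repeat this reasoning for $U_j$, giving the property in both directions. The main obstacle I anticipate is the step that binds the HRA response to the specific key $Y_i$ rather than to the identity alone: one has to argue that an attacker cannot replay a previously observed $H(R_i)$ for some other key, which follows because HAKF indexes by the pair $(ID_i,Y_i)$ and the response $f(Y_i)$ is itself determined by $Y_i$ used as the challenge. Once that indexing point is made explicit, the conclusion that SCC5G achieves explicit key authentication follows directly from the definition.
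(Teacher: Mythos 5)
Your proposal is correct and follows essentially the same two-part decomposition as the paper's proof: the signature establishes possession of the private key, while the HRA response bound to $ID_i$ in the certificate database ties the public key to the intended user. Your version is simply more explicit about the HAKF lookup, unclonability, and the replay/indexing issue, all of which the paper leaves implicit.
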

\begin{proof}
Since the public key certificate in SCC5G binds the ID, public key, signed public key, and the unclonable tamper-resistant response of the user's HRA together, it authenticates the key and proves its ownership. The public key along with the sign verify that the certificate ties the public key and its private key, whereas the verification part extracted from the HRA certifies that the public key is for the intended user. Hence, SCC5G carries the explicit key authentication property.   
\end{proof}

\begin{definition}
    \textbf{Forward and backward secrecy} are two properties for the cryptographic keys which state that revealing a key should not reveal any other upcoming key or former key, respectively. In other words, there should be no relationship between a key and its successor and/or predecessor keys that make them extractable from one another. The \textit{perfect} forward secrecy ensures that there is no long-term key that if revealed may lead to revealing a new key. A long-term key example is the master private key in the cryptosystems where the entire system security is built based on a single master key. 
\end{definition}

It is obvious that in SCC5G there is no dependency between the keys, as they are randomly chosen by the users. Hence, it carries perfect forward and backward secrecy in addition to key independence property. The keys can also be updated at any time by the user, thus, SCC5G carries key freshness property, too.

\begin{lemma}
SCC5G is resistant to ID spoofing attacks. 
\end{lemma}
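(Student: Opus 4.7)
The plan is to prove the claim by contradiction. Suppose an adversary $A$ can successfully impersonate a legitimate user $U_i$ in a handshake with some $U_j$ under the identity $ID_i$. I would first fix what a successful impersonation means in Algorithm (\ref{alg::handshake}): $A$ must deliver $(Y, Cert)$ with $Cert = (ID_i, Sgn, H(R))$ such that the signature check at $U_j$ passes, the HAKF lookup on $(ID_i, Y)$ returns a matching stored hash, and the subsequent session-key exchange completes so that $A$ can actually hold a conversation with $U_j$ under $U_i$'s identity. I would then split on which public key $Y$ the adversary presents.

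In the first case, $Y = Y_i$ is the genuine public key of $U_i$, obtained by eavesdropping on a previous handshake. The attacker can replay $Sgn_i$ and $H(R_i)$ verbatim, so the signature verification at $U_j$ and the HAKF verification both succeed. However, once $U_j$ computes $S_{i,j} = Encapsulate(Y_i)$ and protects its outbound messages under this session key, recovering anything $U_j$ sends requires the private key $s_i$ by Assumption \ref{assum:cryptosystem}. Since $s_i$ never leaves $U_i$, $A$ cannot sustain the session and the spoof does not yield any useful impersonation.

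In the second case, $Y \neq Y_i$ is a key of $A$'s own choosing, possibly one for which $A$ knows the private key so that the signature verifies. Now HAKF's stored entry at address $(ID_i, Y)$ is $H(f_{HRA_i}(Y))$, computed by $U_i$'s HRA at initialization and loaded into the certification database. To pass the HAKF check $A$ must submit exactly this value, which requires either invoking $HRA_i$ on input $Y$ — impossible because $HRA_i$ is in $U_i$'s custody and is both tamper-resistant and physically unclonable — or guessing the hash, which by the high entropy of the challenge-response map and the collision resistance of $H$ succeeds only with negligible probability. Either way the verification returns False, so the handshake is terminated.

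The main obstacle will be disposing of the first case cleanly, because the handshake material transits the network in the clear and a replay-style spoof passes the surface-level signature and HAKF checks. The right response, as sketched above, is to insist that impersonation must include completing the key agreement and conversing under the claimed identity, not merely having a single initial frame accepted at face value; without $s_i$ the adversary is cut off the moment $U_j$ begins to rely on the derived session key.
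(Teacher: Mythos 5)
Your proof is correct, but it takes a noticeably more rigorous route than the paper's own argument. The paper's proof is essentially a two-sentence assertion: mutual authentication happens on every communication instance, and the certificate binds the ID to the key and to the HRA response, therefore spoofing fails. You instead formalize what a successful impersonation means, then split on whether the adversary presents the victim's genuine public key or a key of its own. Your second case is the substance of the paper's argument made precise: the HAKF entry at $(ID_i, Y)$ is the hash of $HRA_i$'s response to $Y$, and unclonability plus tamper resistance plus the entropy of the challenge--response map mean the adversary cannot produce it. Your first case, the replay of $(Y_i, Sgn_i, H(R_i))$, is something the paper's proof does not address at all, and it is a real concern since all certificate material travels in the clear; your resolution --- that impersonation must include completing the key agreement, which requires $s_i$ to decapsulate the session key by Assumption (\ref{assum:cryptosystem}) --- is the right one, though it rests on a definitional choice (a spoof that merely gets one frame accepted does not count) that you correctly make explicit. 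The trade-off: the paper's proof is shorter and matches the informal register of its other lemmas, while yours actually closes the replay gap and makes clear that the signature check alone binds nothing to $ID_i$ --- only the HAKF lookup does. One small caution: your case analysis implicitly assumes the certification database contains an entry for every possible challenge $Y$ under $ID_i$ (so that a lookup on a forged $Y$ does not simply fail open); Algorithm (\ref{alg::init}) does store the full HRA image, so this holds, but it is worth stating.
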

\begin{proof}
First, SCC5G requires mutual authentication for each and every communication instance. Second, SCC5G binds the user's ID to the key, and to the certificate. First and second together guarantee the verification of the users' IDs for each communication instance which makes the SCC5G resistant to ID spoofing attacks.  
\end{proof}

\begin{lemma}
\label{lem:mitm}
SCC5G is resistant to man-in-the-middle attacks.
\end{lemma}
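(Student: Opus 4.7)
The plan is to show that any man-in-the-middle adversary $M$ sitting between legitimate users $U_i$ and $U_j$ cannot complete the handshake of Algorithm (\ref{alg::handshake}) in a way that establishes a shared session key with either endpoint. The approach is to enumerate, from the structure of $Cert_i = (ID_i, Sign(Y_i)_{s_i}, H(f(Y_i)))$, what $M$ would need to forge in order to be accepted as $U_i$ by $U_j$ (and symmetrically as $U_j$ by $U_i$), and then argue that each required object is unforgeable under the stated properties of the HRA, the PQC primitives, and the HAKF lookup.

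I would split into two cases. In the first case, $M$ simply relays $Y_i$ to $U_j$ unchanged: then by Assumption (\ref{assum:cryptosystem}) the private key $s_i$ cannot be extracted from $Y_i$, so $M$ cannot compute $S_{i,j} = Encapsulate(Y_i)$ on its own, and the channel degenerates to a pure relay that leaks nothing beyond what the underlying cryptosystem already protects. In the second case, $M$ substitutes a key $Y'$ of its choice while still claiming $ID_i$: then HAKF retrieves the hash precomputed from the genuine $HRA_i$ on input $Y'$, and verification succeeds only if $M$ can produce $H(f(Y'))$ for a physical device it does not possess. The central step of the plan is to rule this out by invoking the unclonability, tamper-resistance, and high-entropy properties of the HRA assumed in Section (\ref{sec:systemModel}), together with the fact that the HAKF database stores only hashes indexed by user ID rather than the PUF image itself. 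Combined with the explicit key authentication property proved earlier, both endpoints will terminate the handshake upon receiving a false verification response from HAKF, so no shared key with $M$ is ever established.

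The hard part will be tying the informal ``unclonability'' assumption to a precise non-forgeability statement for the composite object $(ID_i, Y', H(f(Y')))$, and in particular ensuring that the argument does not accidentally rely on secrecy of the HAKF database. The subtlety is that $M$ may legitimately register its own HRA, query HAKF with its own certificates, and observe hashes published under other users' IDs; one must argue that such public information, combined with previously observed handshakes of $U_i$, still gives only negligible advantage in predicting $f(Y')$ on a fresh $Y'$. This is exactly where the physical unclonability and the large challenge-response state space of the HRA are essential, and I expect the bulk of the write-up to be spent making that reduction explicit; once it is in place, the MitM resistance follows directly from the case analysis above.
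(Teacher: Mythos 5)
Your proposal is correct and rests on the same core argument as the paper's own proof: the binding of the user ID, the public key, and the unclonable, tamper-resistant HRA response inside the certificate is what makes substitution or impersonation by a man-in-the-middle infeasible. Note that the paper's proof is only two sentences asserting exactly this, so your case split (pure relay vs.\ key substitution), your use of the HAKF lookup failing on $H(f(Y'))$ for a device the attacker does not possess, and your flagged reduction from informal unclonability to unforgeability of the composite $(ID_i, Y', H(f(Y')))$ all go well beyond the level of detail the paper actually supplies rather than diverging from it.
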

\begin{proof}
In a man-in-the-middle attack, the attacker inserts itself between the user and the certifying entity or between two users and foists itself instead of the other communication party. The uniqueness of the HRA responses and the bind between the user ID, public key, and the HRA response makes the man-in-the-middle attack infeasible. 
\end{proof}

\begin{lemma}
SCC5G architecture provides confidentiality and integrity services for critical-mission users communicating over a commercial 5G network. 
\end{lemma}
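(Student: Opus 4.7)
The plan is to reduce the claim to two ingredients already in hand: (i) Assumption \ref{assum:cryptosystem}, which guarantees confidentiality and integrity of ciphertexts under the PQC scheme for any party lacking the proper key, and (ii) the end-to-end key agreement established in the handshake of Algorithm \ref{alg::handshake}. The structure I would use is a two-step argument: first show that, at the end of a successful handshake between $U_i$ and $U_j$, the session key $S_{i,j}=Encapsulate(Y_i)=Encapsulate(Y_j)$ is known \emph{only} to $U_i$ and $U_j$; then apply Assumption \ref{assum:cryptosystem} to conclude that any subsequent payload encrypted under $S_{i,j}$ inherits confidentiality and integrity against every other party (including the 5G core, the serving network, HAKF, and any on-path eavesdropper).

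For the first step I would argue as follows. By construction, $S_{i,j}$ is derived via the CRYSTAL-Kyber KEM from the public keys $Y_i,Y_j$, so the only way a third party could compute it is by either (a) recovering a private key $s_i$ or $s_j$, or (b) substituting its own public key for $Y_i$ or $Y_j$ during the handshake. Path (a) is ruled out by the hardness of the M-LWE problem underlying the cryptosystem, which is subsumed in Assumption \ref{assum:cryptosystem}. Path (b) is precisely a man-in-the-middle attempt on the key-establishment phase and is ruled out by Lemma \ref{lem:mitm}, together with the explicit key authentication property already established: the HRA-bound certificate $Cert_i=\mathrm{Concat}(ID_i,Sgn_i,H(R_i))$ verified by HAKF ties $Y_i$ uniquely to $U_i$, so the handshake terminates unless each side holds the genuine public key of the intended peer. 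Consequently, after a non-terminated handshake, $S_{i,j}$ is held exclusively by the two endpoints.

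For the second step I would simply instantiate Assumption \ref{assum:cryptosystem} with the key $S_{i,j}$: any message $m$ transmitted as $Enc_{S_{i,j}}(m)$ across the commercial 5G underlay preserves confidentiality (no party other than $U_i,U_j$ can recover $m$) and integrity (no party other than $U_i,U_j$ can produce a ciphertext that decrypts to a chosen $m'$ without detection). Because SCC5G forces this session-key establishment before any data exchange and demands it anew for every communication instance, the two properties extend to every critical-mission flow carried over the 5G network.

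The main obstacle I anticipate is the first step, specifically making the secrecy of $S_{i,j}$ fully airtight without re-proving properties of Kyber's KEM. In a conference-length write-up I would lean on the already-established MitM resistance (Lemma \ref{lem:mitm}) and the explicit key authentication property to discharge the authentication side, and explicitly defer the cryptographic hardness side to Assumption \ref{assum:cryptosystem}; a more careful treatment would need to rule out subtle interleaving attacks across concurrent handshakes, but since each handshake is bound to fresh $(Y_i,Y_j)$ and freshly hashed HRA responses verified by HAKF, such interleavings cannot yield a session key usable against a target pair.
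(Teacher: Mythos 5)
Your proposal is correct and follows essentially the same route as the paper: both reduce the claim to Assumption~\ref{assum:cryptosystem} for the cryptographic layer and then argue that the keys cannot be obtained by a third party, discharging the key-substitution case via the man-in-the-middle resistance of Lemma~\ref{lem:mitm} (your version is organized a bit more cleanly around the exclusivity of the session key $S_{i,j}$, with an explicit dichotomy between private-key recovery and public-key substitution). The only point the paper covers that you omit is the physical-compromise scenario, which it dismisses by appealing to the tamper-resistance of the HRA binding the key to the device; this is a minor addition rather than a gap in your argument for communication carried over the 5G underlay.
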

\begin{proof}
To provide confidentiality, end-to-end encryption is required to assure that the encryption key(s) has/have not been revealed or forged. To provide integrity, in addition to the end-to-end encryption condition, the cryptosystem is also required to keep the encrypted data correlated such that altering or omitting any bit damages the entire block of data. SCC5G provides the users with end-to-end PQC encryption. Hence, the confidentiality and the integrity of the communication are conditioned by the confidentiality of the utilized cryptosystem (CRYSTAL-Kyber and CRYSTAL-Dilithium), and the resistance of the architecture against revealing the keys. According to Assumption (\ref{assum:cryptosystem}), the utilized cryptosystem keeps the encrypted data confidential and integrated. Since the private key of each user is generated by the user himself and not communicated over the network, the attacker cannot get access to it except by physically compromising the user. Even in the case of the physical compromise, since the key is bound to the HRA and the HRA is tamper-resistant, the attacker cannot generate a new key. The only other known way to attack the confidentiality and the integrity of communication is to perform a man-in-the-middle attack, where according to Lemma (\ref{lem:mitm}), SCC5G is resistant to this attack and hence, it provides confidentiality and integrity.  
\end{proof}

\begin{lemma}
SCC5G provides the users with authentication, authorization, and non-repudiation services.
\end{lemma}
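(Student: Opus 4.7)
The plan is to break the claim into its three components --- authentication, authorization, and non-repudiation --- and dispatch each by chaining together the lemmas and assumptions already in hand, so that the bulk of the argument is bookkeeping rather than new machinery.

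First I would dispose of authentication by appealing to the explicit key authentication lemma in combination with the mutual handshake of Algorithm (\ref{alg::handshake}). Every session requires both endpoints to present a certificate binding $ID_i$, the public key $Y_i$, its signature $Sgn_i$, and the hashed HRA response $H(f(Y_i))$; since HAKF validates this binding for each party on each instance, both endpoints are mutually authenticated. Authorization then follows almost immediately: during the initialization phase (Algorithm (\ref{alg::init})) HAKF's certification database is populated only with the hashed HRA images of enrolled critical-mission users, so any $(ID_i, Y_i)$ pair that clears the HAKF check is by construction on the authorized list, and by the assumption that agents will not respond when authentication fails, unauthorized users are simply refused service.

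For non-repudiation, I would argue that a sender cannot plausibly deny a message it transmitted under its certificate, because reproducing such a certificate requires three things simultaneously --- the private key $s_i$ (needed to produce $Sgn_i$), the physical device hosting $HRA_i$ (needed to compute $f(Y_i)$), and a matching $ID_i$-indexed entry in the HAKF database. Each of these is uniquely tied to the owner: the first by Assumption (\ref{assum:cryptosystem}) on the unforgeability of CRYSTAL-Dilithium, and the second and third by the unclonability and tamper-resistance of the HRA loaded at enrolment.

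The main obstacle will be this last step. Authentication and authorization reduce cleanly to previous results, but non-repudiation mixes a physical security assumption (no one else can ever produce the correct $H(f(Y_i))$ for a given $Y_i$) with a cryptographic one (no one without $s_i$ can sign $Y_i$). I would need to argue these two checks are independent --- a physical attacker who compromises the HRA still cannot forge a Dilithium signature, and a cryptographic attacker who somehow recovers $s_i$ still cannot synthesise a matching HRA response --- so together they make repudiation infeasible. I would also note in passing that storing hashed rather than plaintext responses in HAKF does not weaken the argument, since verification only ever needs hash equality on a challenge derived from the public key, and the one-way property of $H$ prevents the HAKF itself from impersonating the user.
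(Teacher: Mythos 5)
Your decomposition into authentication, authorization, and non-repudiation mirrors the paper's proof, and the core ingredients are the same: the certificate binds $ID_i$, $Y_i$, the signature, and the hashed HRA response; unclonability and tamper-resistance of the HRA make the certificate unforgeable; the binding to enrolled IDs yields authorization; and the signature yields non-repudiation. Your non-repudiation argument is in fact more careful than the paper's, which dispatches it in one sentence (``the presence of the digital signature is also guaranteeing non-repudiation''), whereas you correctly identify that the private key, the physical HRA, and the HAKF entry must all be tied to the owner and argue their independence.

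There is one step the paper takes that your proposal omits, and it is a genuine gap in the chain: you assert that ``HAKF validates this binding for each party on each instance,'' but you never argue that the verification \emph{result} returned by HAKF reaches the verifying user untampered. An on-path adversary who presents a forged certificate could simply flip HAKF's negative response (ver\_Res) to \texttt{True}, defeating the whole check without ever forging a certificate. The paper closes this hole explicitly: it walks through the six communication instances of Algorithm~(\ref{alg::handshake}), observes that HAKF encrypts its responses under the requesting user's public key, and then invokes Lemma~(\ref{lem:mitm}) (MitM resistance) to rule out manipulation of that channel. Your proof needs the analogous step. A minor additional point: you cite Assumption~(\ref{assum:cryptosystem}) for the unforgeability of Dilithium signatures, but that assumption as stated only covers confidentiality and integrity of encrypted data, so signature unforgeability should be invoked as a separate (standard) assumption rather than derived from it.
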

\begin{proof}
SCC5G fulfills the ZT requirement by performing mutual authentication for each communication instance. Hence, to prove providing authentication service, we need to prove that mutual authentication works perfectly secure. The mutual authentication process, according to Algorithm (\ref{alg::handshake}), has seven steps and six data communication instances (as also shown in Fig. \ref{fig::handshake}). The first and the fourth communication instances are between the users, where they exchange their certificates. The certificates are public information transferred plainly, i.e. without encryption. The second and fifth communication instances include verifiable public  information. The rest of the communication instances are encrypted communication from HAKF to the users. Hence, to prove the security of the authentication and authorization process, we need to show that no one else can forge the certificate. Each certificate binds four values together, user ID, the user's public key, the signed public key by the corresponding private key, and the HRA verification part. Binding the user ID and the unclonable tamper-resistant HRA to the key guarantees authorization, in case of authentication success. The presence of the digital signature is also guaranteeing non-repudiation. Hence, the authentication, authorization, and non-repudiation security are now conditioned by the correctness of the certificate. To forge a certificate, the attacker needs to find the hashed value of the users' HRA response to the challenge consisting of the combination of the public key and user ID. Since the HRA is assumed to be unclonable and tamper-resistant, no one other than the corresponding user can extract the response. Hence, the certificate cannot be forged. However, if the attacker becomes able to change the HAKF response to verify a forged certificate, can perform a man-in-the-middle attack. While the communication from HAKF is encrypted, we further show in Lemma (\ref{lem:mitm}), that SCC5G is resistant to man-in-the-middle attacks. Hence SCC5G provides authentication, authorization, and non-repudiation services. 
\end{proof}

\subsection{Performance Evaluation}
To evaluate the performance of the proposed architecture, we focus on the efficiency and scalability of the proposed architecture. Since SCC5G adds a mutual authentication to each communication instance, to verify its scalability and efficiency, we need to investigate the latency and traffic overhead it poses to each communication instance. High latency and/or authentication traffic overhead lead to the infeasibility of the proposed architecture in the real world. While those parameters could be easily measured via a simple simulation scenario, the results of such an evaluation might not be dependable. Hence, we designed a comprehensive simulation scenario using ns-3 network simulator \cite{ns3} to test the feasibility and to evaluate the performance of SCC5G. We first describe the simulation scenario and then show the results of the evaluation process.

Table (\ref{tbl::simSetting}) represents the  simulation setting in brief. We simulated a 5G network with a single base station and different numbers of user equipment (UE) ranging from two to 40 in the increment steps of two. The network is a 5G cellular network working on the mid-band, including two bands one for time division duplexing (TDD) and the other for frequency division duplexing (FDD) each with 100 MHz bandwidth centered on 2 and 2.2 GHz, respectively. The TDD band includes one bandwidth part (BWP) specified for voice calls where the FDD is divided into two BWPs one for video streaming (download) and the other for gaming (upload). All users move according to a random waypoint (RWP) mobility model in a squared area cell of $500\times500$ meters. We selected these area dimensions to encompass network evaluation under congested conditions. All users generate all three types of traffic and one of them initiates an authentication process after a warmup time. We ran each simulation instance with different random generator seeds over 100 times and reported the average of the results. 
\begin{figure}
\centering
    \includegraphics[width=\linewidth]{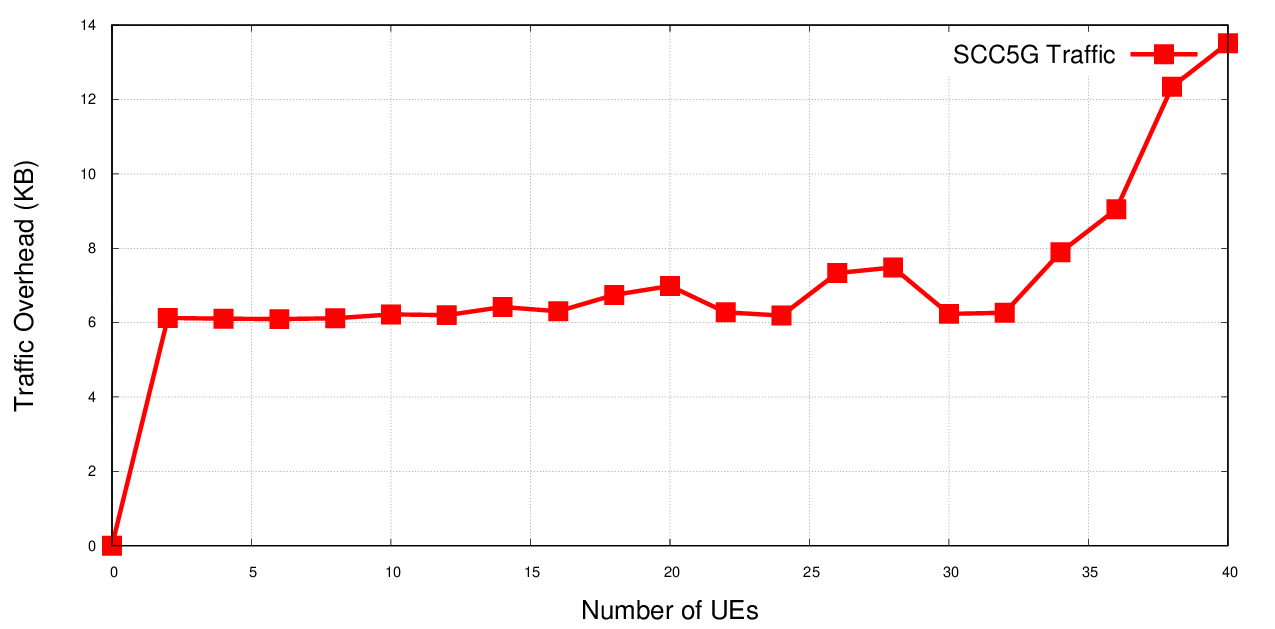}
\caption{SCC5G mutual authentication traffic overhead (KB). }
\label{fig:traffic}
\vspace{-0.5 cm}
\end{figure}

\begin{table}[t]
\caption[]{Simulation Setting }
\resizebox{1\textwidth}{!}{
\begin{minipage}{\textwidth}
\begin{tabular}{ l | l  }
 Simulator version & ns-3 3.37 with 5G-LENA module\\
  Number of UEs & $[2\quad 40]$ \\
  Area size & $500\times 500 m$\\
  Warmup time & 100 msec\\
  Speed range & $[0 \quad 20] m/s$\\
  Mobility models & RWP\\
  Traffic type & UDP \\
  Voice flow type & GBR\_CONV\_VOICE\\
  Video flow type & GBR\_CONV\_VIDEO\\
  Gaming flow type & GBR\_GAMING\\
  UE antenna  & $2\times 4$\\
  Base station antenna  & $4 \times 8$\\
  Antenna model& Isotropical \\
  Base station antenna power& 4 dB\\  
  No. of component carriers(CC) & Two $CC_0$ and $CC_1$\\
  $CC_0$ band type&TDD \\
  $CC_1$ band type&FDD\\
  No. of bandwidth parts & Three\\
  $CC_0$ Central frequency&2 GHz \\
  $CC_1$ Central frequency&2.2 GHz \\
  CC Bandwidth&100 MHz \\
  Certificate packet size& 1 KB\\
\end{tabular}
\label{tbl::simSetting}
\end{minipage}}
\vspace{-0.6 cm}
\end{table}

Fig. (\ref{fig:traffic}) shows the average amount of the imposed authentication traffic of SCC5G algorithm. This metric is directly related to the throughput of the network, as the more congested network leads to more packet drops and accordingly more retransmissions. As Fig. (\ref{fig:traffic}) shows, the increment in the network traffic slightly increases the average traffic overhead. This parameter starts with around 6 KB for the uncongested network which represents the six transmissions of one KB each for a mutual authentication process. The handshake process includes six packet transmissions where two of which are user-to-user and the rest are between the users and the HAKF. In the most congested network, this number is as low as 13 KB. Overall, this result shows that SCC5G traffic is negligible in comparison with the data transmission of such a congested network. 

Fig. (\ref{fig:latency}) shows the average time required for the mutual authentication process of SCC5G starting from the authentication initialization of the source node and ending with the authentication success at both the source and the destination nodes. The reported times are in seconds starting at about 0.25 sec for the uncongested network and slightly less than 0.55 second for the highly-congested network. While not represented here, we investigated the end-to-end delay of a single small packet latency in such a network without any extra traffic. It takes about 0.04 sec for such a packet to be transferred from the source and received by the destination node. Recalling that the mutual authentication process has six transmission instances, the overall mutual authentication time is low for the less congested network. Considering that each data transmission communication needs only one mutual authentication process to take place, the reported authentication process time sounds feasible, even for highly congested networks.   

\begin{figure}
\centering
    \includegraphics[width=\linewidth]{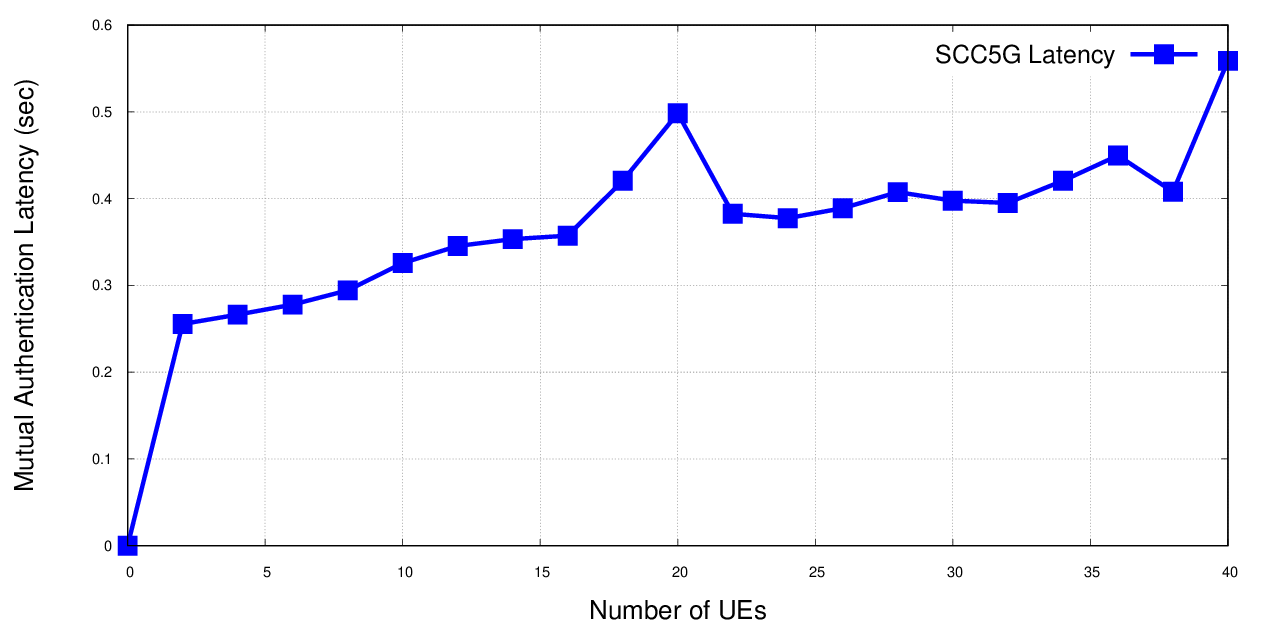}
\caption{SCC5G mutual authentication latency (sec). }
\label{fig:latency}
\vspace{-0.8 cm}
\end{figure}

\section{Conclusion}
\label{sec::conclusion}
In this paper, we present SCC5G, a secure critical-mission communication architecture over a 5G network. It includes a full architecture design with a careful protocol design for PQC end-to-end encrypted communication considering the ZT setting. The comprehensive security and performance evaluation show that SCC5G is not only feasible to be implemented, efficient, and scalable but also it is resistant to well-known security attacks such as man-in-the-middle and ID spoofing attacks. The proposed architecture is based on an assumption that the existing cellular network is cooperative with the critical-mission management system, such that the network admins are able to add a repository behind the IP network to the 5G home network. As a future direction of this work, we aim at relaxing his assumption and distributing the repository among the critical-mission users such that any $k$ out of $n$ of them can cooperate to validate a certificate.


\nocite{*}
\bibliographystyle{IEEEtran}
\bibliography{References}


\end{document}